\documentclass{tlp}

\usepackage{times}
\usepackage[T1]{fontenc}
\usepackage{amsmath,amssymb}
\usepackage{epsfig}
\usepackage{todonotes}

\newcommand{\wrt}{{\it w.r.t. }}    
\newcommand{\eg}{\emph{e.g., }}     
\newcommand{\ie}{\emph{i.e.} }      
\newcommand{\etal}{\emph{et al.}}   

\newtheorem{definition}{Definition}

\newtheorem{theorem}{Theorem}
\newtheorem{corollary}{Corollary}
\newtheorem{proposition}{Proposition}

\begin{document}

\title[Range-based argumentation semantics as 2-valued models]
{Range-based argumentation semantics as 2-valued models}

\author[M. Osorio, J. C. Nieves]
{ MAURICIO OSORIO \\
Universidad de las Am\'ericas - Puebla\\
Depto. de Actuar\'ia, F\'isica y Matem\'aticas\\
Sta. Catarina M\'artir, Cholula, Puebla, 72820 M\'exico\\
\email{osoriomauri@gmail.com}
\and
JUAN CARLOS NIEVES\\
Department of Computing Science\\
Ume{\aa} University\\
SE-901 87, Ume{\aa}, Sweden \\
\email{jcnieves@cs.umu.se}
}

\submitted{February 28 2014}
\revised{August 28 2015 and February 3 2016}
\accepted{February 21 2016}

\maketitle

\begin{abstract}

Characterizations of semi-stable and stage extensions in terms of 2-valued logical models are presented. To this end, the so-called GL-supported and GL-stage models are defined. These two classes of logical models are logic programming counterparts of the notion of range which is an established concept in argumentation semantics.
\end{abstract}

\begin{keywords}
Logic programming semantics, Argumentation Semantics, Non-monotonic Reasoning.
\end{keywords}

\section{Introduction}\label{sec:intro}

Argumentation has been regarded as a non-monotonic reasoning approach since it was suggested as an inference reasoning approach \cite{PV02}. Dung showed that argumentation inference can be regarded as a logic programming inference with \emph{negation as failure }\cite{Dung95}. In his seminal paper \cite{Dung95}, Dung introduced four argumentation semantics: \emph{grounded, stable, preferred} and \emph{complete} semantics.
Currently, it is known that these four argumentation semantics introduced by Dung can be regarded as logic programming inferences by using different mappings, from argumentation frameworks (AFs) into logic programs, and different logic programming semantics (see Section \ref{sec:relatedwork}).

Following Dung's argumentation style, several new argumentation semantics have been proposed. Among them, ideal, semi-stable, stage and CF2 have been deeply explored \cite{BaroniCG11}. Semi-stable and stage semantics were introduced from different points of view; however,
they have been defined in terms of the so-called \emph{ranges} of complete extensions and conflict-free sets, respectively.  It seems that by using the concept of range, one can define different classes of argumentation semantics as is the case with the semi-stable and stage semantics.
Given that the concept of range seems a fundamental component of definitions of argumentation semantics such as semi-stable and stage semantics, the following question arises:

\begin{quote}[Q1]
\emph{How the concept of range can be captured from the logic programming point of view?}
\end{quote}

This question takes relevance in the understanding of argumentation as logic programming.

In this paper, we argue that for capturing the idea of range from the logic programming point view, logic programming reductions which have been used for defining logic programming semantics such as stable model \cite{GelLif88} and p-stable \cite{OsoNavArrBor06} semantics are important.
To show this, we introduce a general schema $SC_1$ which takes as input a logic program $P$ and a set of atoms $M$, then considering a function $R$ which maps $P$ into another logic program, $SC_1$ returns a subset of atoms of the signature of $P$\footnote{The formal definition of $SC_1$ is presented in Section \ref{sec:idealmodels}}.
In order to infer ranges from the argumentation point of view using $SC_1$, the logic program $P$ has to capture an argumentation framework. Let us observe that there are different mappings from AFs into logic programs which have been used for characterizing Dung's argumentation semantics as logic programming inferences \cite{CarNivOso09,Dung95,CamSaAlc13,Strass13}. In this sense, the following question arises:

\begin{quote}[Q2]
\emph{Can the mappings used for characterizing Dung's argumentation semantics characterize range-based argumentation semantics using $SC_1$? }
\end{quote}

In order to give an answer to Q2, we consider the mappings $\Pi_{AF}$  and $\Pi_{AF}^-$ which have been used for characterizing Dung's argumentation semantics in terms of logic programming semantics. $\Pi_{AF}$ has been shown to be a flexible mapping to characterize the grounded, stable, preferred, complete and ideal semantics by using logic programming semantics such as, the well-founded, stable, p-stable, Clark's completion and well-founded$^+$ semantics, respectively \cite{CarNivOso09,NivOsoCor08,NivOso14}. $\Pi_{AF}^-$  has been used to characterize the grounded, stable, preferred, complete, semi-stable and CF2 \cite{Dung95,NivOsoZep11,Strass13}.

Considering $\Pi_{AF}$  and $\Pi_{AF}^-$ for defining two different instantiations of $SC_1$, we will define the so-called \emph{GL-supported} and \emph{GL-stage} models. We will show that \emph{GL-supported} and \emph{GL-stage} models characterize the semi-stable and stage extensions, respectively. In these instantiations of $SC_1$, we will instantiate the function $R$ with the well-known Gelfond-Lifschitz reduction which is the core of the construction of the \emph{stable model semantics} \cite{Gel08}. Moreover, we will point out that $R$ can be instantiated with the $RED$ reduction, which is the core of the \emph{p-stable semantics} \cite{OsoNavArrBor06}, getting the same effect in the constructions of the \emph{GL-supported} and \emph{GL-stage} models.

To the best of our knowledge, $SC_1$ is the first schema designed to capture the range concept from a logic programming point of view. It is worth mentioning that a range-based semantics as semi-stable semantics has been already characterized as logic programming inference \cite{CamSaAlc13,Strass13}; however, these characterizations do not offer a schema for capturing the concept of range from a logic programming point of view in order to characterize (or construct) other range-based argumentation semantics such as stage semantics.

The rest of the paper is structured as follows: In Section \ref{sec:back}, a basic background about logic programming and argumentation is introduced. In Section \ref{sec:Semi-Stable-Stage-Semantics}, by considering a couple of instantiations of $SC_1$, we introduce the so-called \emph{GL-supported} and \emph{GL-stage} models; moreover, we show how these models characterize both semi-stable and stage extensions. In Section \ref{sec:relatedwork}, a discussion of related work is presented. In the last section, our conclusions are presented.

\section{Background}\label{sec:back}
In this section, we introduce the syntax of normal logic programs and the p-stable and stable model semantics. After this, some basic concepts of argumentation theory are presented. At the end of the section, the mappings $\Pi_{AF}$ and $\Pi_{AF}^-$ are introduced.

\subsection{Logic Programs: Syntax}

A signature ${\cal L}$ is a finite set of elements that we call atoms. A literal is an atom $a$ (called \emph{a positive literal}), or the negation of an atom $not~
a$ (called \emph{a negative literal}). Given a set of atoms $\{ a_1, \dots, a_n \}$, we write $not ~ \{
a_1, \dots, a_n \}$ to denote the set of literals $\{not~ a_1,
\dots, not ~ a_n\}$. A \emph{normal clause} $C$ is written as:
$$a_0 \leftarrow a_1, \dots, a_{j}, not~ a_{j+1}, \dots, not~a_n$$

\noindent where $a_i$ is an atom, $0 \leq i \leq n$. When $n = 0$ the normal
clause is called a fact and is an abbreviation of $a_0 \leftarrow \top$, where $\top$
is the ever true atom. A \emph{normal logic program} is a finite set of normal clauses.
Sometimes, we denote a clause {\it C} by $ a \leftarrow {\cal B}^+ ,
\;not\; {\cal B}^-$, where ${\cal B}^+$ contains all the positive body
literals and ${\cal B}^-$ contains all the negative body literals.
When ${\cal B}^- = \emptyset$, the clause {\it C} is called a \emph{definite
clause}. A \emph{definite program} is a finite set of
definite clauses. ${\cal L}_P$ denotes the set of atoms that occurs in P. Given a signature ${\cal L}$, $Prog_{{\cal L}}$ denotes the set of all the programs
defined over ${\cal L}$. Given a normal logic program $P$, $Facts(P) = \{ a | a\leftarrow \top \in P \}$.

In some cases we treat a logic program as a logical theory. In these cases, each negative literal $not \; a$ is replaced by $\neg a$ where $\neg$ is regarded as the classical negation in classical logic. Logical consequence in classical logic is denoted by $\vdash$. Given a
set of proposition symbols $S$ and a logical theory (a set of well-formed formulae) $\Gamma$, $\Gamma \vdash S$ if $\forall s \in S$ $\Gamma \vdash s$.

Given a normal logic program P, a set of atoms is a classical model of $P$ if the induced interpretation evaluates $P$ to true. If $M \subseteq {\cal L}_P$, we write $P \Vdash M$ when: $P \vdash M$ and $M$ is a classical 2-valued model of the logical theory obtained from $P$ (\ie atoms in $M$ are set to true, and atoms not in $M$ to false). We say that a model $M$ of a program $P$ is minimal  if a model $M'$ of $P$ different from $M$ such that $M' \subset M$ does not exist.

\subsection{Stable model and p-stable semantics}\label{subsec:Pstable}
\emph{Stable model semantics} is one of the most influential logic programming semantics in the non-monotonic reasoning community \cite{Bar03} and is defined as follows:

\begin{definition}\cite{GelLif88}\label{def:stable}
Let \emph{P} be a normal logic program. For any set $S \subseteq {\cal L}_P$, let $P^S$ be the definite logic program obtained from \emph{P} by
deleting

\begin{description}
\item[(i)] each clause that has a formula $not \; l$ in its body with
$l \in S$, and then

\item[(ii)] all formul\ae\ of the form $not \; l$ in the bodies of
the remaining rules.
\end{description}

\noindent Then, \emph{S} is a stable model of \emph{P} if \emph{S} is a minimal model of $P^S$.
$Stable(P)$ denotes the set of stable models of $P$
\end{definition}

From hereon, whenever we say \emph{Gelfond-Lifschitz (GL) reduction}, we mean the reduction $P^{S}$. As we can observe GL reduction is the core of the stable model semantics.

There is an extension of the stable model semantics which is called \emph{p-stable semantics} \cite{OsoNavArrBor06}. P-stable semantics was formulated in terms of \emph{Paraconsistent logics}.
Like stable model semantics, p-stable semantics is defined in terms of a single reduction, \emph{RED}, which is defined as fo\-llows\-:

\begin{definition}\cite{OsoNavArrBor06}\label{def:RED-normalProgram}
Let $P$ be a normal program and M be a set of atoms. We define
$RED(P,M) := \{ l \leftarrow {\cal B}^+ , \ not \; ({\cal B}^- \cap M  )| l \leftarrow {\cal B}^+ , \ not \; {\cal B}^- \in P \}$.
\end{definition}

As we can see, GL reduction and $RED$ reduction have different behaviors. On the one hand, the output of GL reduction always is a definite program; on the other hand, the output of RED reduction can contain normal clauses.

By considering $RED$ reduction, the \emph{p-stable} semantics for normal logic programs is defined as follows:

\begin{definition}\cite{OsoNavArrBor06}\label{def:pstable}
Let $P$ be a normal program and $M$ be a set of atoms. We say that $M$ is a p-stable model of $P$ if $RED(P,M ) \Vdash M$. $P\text{-}stable(P)$ denotes the set of p-stable models of $P$.
\end{definition}

The stable model and p-stable semantics are two particular 2-valued semantics for normal program.
In general terms, a logic programming semantics $SEM$ is a function from the class of all programs into the powerset of the set of (2-valued) models.

Before moving on, let us introduce the following notation. Let $P$ be a logic program, $2SEM(P)$ denotes the 2-valued models of $P$.
Given two logic programming semantics $SEM_1$ and $SEM_2$, $SEM_1$ is \emph{stronger than} $SEM_2$ if for every logic program $P$,  $SEM_1(P)  \subseteq SEM_2(P)$. Let us observe that the relation \emph{stronger than} between logic programming semantics is basically defining an order between logic programming semantics.

\subsection{Argumentation theory}\label{sec:back-argumentation}

In this section, we introduce the definition of some argumentation semantics. To this end, we start by defining the basic structure of an argumentation framework (AF).

\begin{definition}\cite{Dung95}
An argumentation framework is a pair $AF := \langle AR, attacks
\rangle$, where \emph{AR} is a finite set of arguments, and
\emph{attacks} is a binary relation on \emph{AR}, \ie \emph{attacks}
$\subseteq AR \times AR$.
\end{definition}

We say that \emph{a attacks b} (or $b$ is attacked by $a$) if $(a,b) \in attacks$ holds. Similarly, we say that a set $S$ of arguments attacks $b$ (or $b$ is attacked by $S$) if $b$ is attacked by an argument in $S$. We say that $c$ \emph{defends} $a$ if $(b,a)$ and $(c,b)$ belongs to $attacks$.

Let us observe that an AF is a simple structure which captures the conflicts of a given set of arguments. In order to select \emph{coherent points of view} from a set of conflicts of arguments, Dung introduced the so-called \emph{argumentation semantics}. These argumentation semantics are based on the concept of an \emph{admissible set}.

\begin{definition}\cite{Dung95}
\begin{itemize}
\item A set {\it S} of arguments is said to be conflict-free if there are
no arguments {\it a, b} in {\it S} such that {\it a attacks b}.
\item  An argument $a \in AR$ is acceptable with respect to a set $S$ of
arguments if for each argument $b \in AR$: If $b$ attacks $a$ then $b$ is attacked by $S$.
\item  A conflict-free set of arguments $S$ is \emph{admissible} if each argument in $S$ is acceptable \wrt $S$.
\end{itemize}
\end{definition}

Let us introduce some notation. Let $AF := \langle AR, attacks \rangle$ be an AF and $S \subseteq AR$. $S^+ = \{b  | a \in S \text{ and } (a,b) \in attacks\}$.

\begin{definition}\cite{Cam06,Dung95}\label{def:semArg}
Let $AF := \langle AR, attacks \rangle$ be an argumentation framework. An admissible set of arguments $S \subseteq AR$ is:
\begin{itemize}
  \item \emph{stable} if $S$ attacks each argument which does not belong to $S$.
  \item \emph{preferred} if $S$ is a maximal (\wrt set inclusion) admissible set of $AF$ .
  \item \emph{complete} if each argument, which is acceptable with respect to $S$, belongs to $S$.
  \item \emph{semi-stable} if $S$ is a complete extension such that $S \cup S^+$ is maximal \wrt set inclusion.
\end{itemize}
\end{definition}

In addition to argumentation semantics based on admissible sets, there are other approaches for defining argumentation semantics \cite{BaroniCG11}. One of these approaches is the approach based on \emph{conflict-free sets} \cite{Verheij96}. Considering conflict-free sets, Verheij introduced the so-called \emph{stage semantics}:

\begin{definition}\label{def:StageSem}
Let $AF :=\langle AR, attacks \rangle $ be an argumentation framework. $E$ is a stage extension if $E$ is a conflict-free set and $E \cup E^+$ is maximal \wrt set inclusion.
\end{definition}

Let us observe that both semi-stable and stage semantics are based on the so-called \emph{range} which is defined as follows: If $E$ is a set of arguments, then $E \cup E^+$ is called its range. According to the literature, the notion of range was first introduced by Verheij \cite{Verheij96}.

\subsection{Mappings from argumentation frameworks to normal programs}\label{sec:mapping}

In this section, two mappings from an AF into a logic program will be presented.
These mappings are based on the ideas of \emph{conflictfreeness} and \emph{reinstatement} which are the basic concepts behind the definitions of conflict-free sets and admissible sets. In these mappings, the predicate $def(x)$ is used, with the intended meaning of $def(x)$ being ``$x$ is a defeated argument''.

A pair of mapping functions \wrt an argument is defined as follows.

\begin{definition}\label{mapping}
Let  $AF := \langle AR, attacks \rangle $ be an argumentation framework and $a \in AR$. We define a pair of mappings functions:

$$ \Pi^{-}(a)  =  \bigcup_{b :(b,a) \in attacks } \{ def(a) \leftarrow  \; not \; def(b)\}$$

$$\Pi^{+}(a)  = \bigcup_{b :(b,a) \in attacks} \{ def(a) \leftarrow \bigwedge_{c:(c,b) \in attacks} def(c) \}$$

\end{definition}

Let us observe that $ \Pi^{-}(a)$ suggests that an argument $a$ is defeated when anyone of the arguments which attack $a$ is not defeated. In other words, an argument that has an attacker that is not defeated has to be defeated; hence, $\Pi^-(a)$ stands for conflictfreeness. $\Pi^{+}(a)$ suggests that an argument $a$ is defeated when all the arguments that defends $a$ are defeated. In other word, any argument that is not defeated has to be defended; therefore $\Pi^{+}(a)$ stands for admissibility.

One can see that if a given argument $a$ has no attacks, then $\Pi^{-}(a) = \{\}$ and $\Pi^{+}(a) = \{\}$.
This situation happens because an argument that has no attacks is an acceptable argument which means that it belongs to all extensions sets of an $AF$.

By considering $\Pi^{-}(a)$ and $\Pi^{+}(a)$, two mappings from an AF into a logic program are introduced.

\begin{definition}\label{def:mappings}
Let $AF :=\langle AR, attacks \rangle $ be an argumentation
framework. We define their associated normal programs as follows:

$$\Pi_{AF}^- := \bigcup_{a \in AR} \{ \Pi^-(a) \}$$
$$\Pi_{AF} := \Pi_{AF}^- \cup \bigcup_{a \in AR} \{ \Pi^+(a) \}  $$
\end{definition}

Observing Definition \ref{def:mappings}, it is obvious that $\Pi_{AF}^-$ is a subset of  $\Pi_{AF}$. However, each mapping is capturing different concepts: $\Pi_{AF}^-$ is a declarative specification of the idea of conflictfreeness and $\Pi_{AF}$ is a declarative specification of both ideas: conflictfreeness and reinstatement. Indeed, one can see that the 2-valued logical models of $\Pi_{AF}^-$ characterize the conflict-free sets of  an $AF$ and the 2-valued logical models of $\Pi_{AF}$ characterize the admissible sets of an $AF$.

\section{Semi-stable and Stage extensions as 2-valued models}\label{sec:Semi-Stable-Stage-Semantics}

This section introduces the main results of this paper. In particular, we will show that the following schema $SC_1$ suggests an interpretation of range from the logic programming point of view:

\[SC_1(P,M) = Facts(R(P,M)) \cup \{{\cal L}_{P} \setminus M\}\]

\noindent in which $P$ is a logic program, $R$ is a function which maps a logic program into another logic program considering a set of atoms $M \subseteq {\cal L}_{P}$.

In order to show our results, we will introduce two instantiations of the schema $SC_1$. These instantiations will lead to the so-called \emph{GL-supported models} and  \emph{GL-stage models}. We will show that the GL-supported models of $\Pi_{AF}$ characterize the semi-stable extensions of a given AF (Theorem \ref{theorem:semi-models-semi-models}); moreover, the GL-stage models of $\Pi_{AF}^-$ characterize the stage extensions of a given AF (Theorem \ref{theorem:stage-models}).

\subsection{Semi-Stable Semantics}\label{sec:idealmodels}

We start presenting our results \wrt semi-stable semantics. To this end, let us start defining the concept of a \emph{supported model}.

\begin{definition}[Supported model]\label{def:supported-models}
Let $P$ be a logic program and $M$ be a 2-valued model of $P$. $M$ is \emph{a supported model} of $P$ if for each $a \in M$, there is $ a_0 \leftarrow {\cal B}^+, \;not\; {\cal B}^- \in P$ such that $a = a_0$,  ${\cal B}^+ \subseteq M$ and  ${\cal B}^- \cap M = \emptyset$.
\end{definition}

As we saw in Definition \ref{def:semArg}, semi-stable extensions are defined in terms of complete extensions. It has been shown that the supported models of $\Pi_{AF}$ characterize the complete extensions of a given AF \cite{OsoNivSan13ENC}. By having in mind this result, we introduce an instantiation of the schema $SC_1$ in order to define the concept of \emph{GL-supported-model}.

\begin{definition}[GL-supported-model]\label{def:range-supported-models}
Let $AF = \langle AR, Attacks \rangle$ be an argumentation framework and $M$ be a supported model of $\Pi_{AF}$.
$M$ is a GL-supported-model of $\Pi_{AF}$ if $Facts((\Pi_{AF})^M) \cup \{ {\cal L}_{\Pi_{AF}} \setminus M \}$ is maximal \wrt set inclusion. $GLModels(\Pi_{AF})$ denotes the GL-supported models of $\Pi_{AF}$.
\end{definition}

In other words, a supported model $M$  of $\Pi_{AF}$ is a GL-supported-model if for every supported model $N$ of $\Pi_{AF}$ such that $N$ is different of $M$, $SC(\Pi_{AF}, M) \not\subset SC(\Pi_{AF}, N)$ where $SC(\Pi_{AF}, X) = Facts((\Pi_{AF})^X) \cup \{ {\cal L}_{\Pi_{AF}} \setminus X \}$.

Let us observe that the function $R$ of the schema $SC_1$ was replaced by the GL-reduction in the construction of a GL-supported model. One of the main constructions of the definition of a GL-supported model is $Facts((\Pi_{AF})^M)$. This part of the construction of a GL-supported model is basically  characterizing the set $E^+$ where $E$ is a complete extension. We can see that the GL reduction is quite important for this construction. As we saw in Definition \ref{def:stable}, GL reduction is the core of the definition of stable models.

We want to point out that the definition of GL-supported models can also be based on the RED reduction which is the reduction used for defining p-stable models (see Definition \ref{def:pstable}).
This similarity between RED and GL reductions argues that both RED and GL reductions can play an important role for capturing the idea of range of an argumentation framework from a logic programming point of view.  As we will see in the following theorem, GL-supported models characterize semi-stable extensions; hence, both RED and GL reductions play an important role for capturing semi-stable extension as 2-valued logical models.

In order to simplify the presentation of some results, let us introduce the following notation. Let $E_M =  \{x | def(x) \in {\cal L}_{\Pi_{AF}} \setminus M\}$ and $E_M^+ = \{ x | def(x) \in Facts((\Pi_{AF})^M) \}$ where $M \subseteq {\cal L}_{\Pi_{AF}}$. As we can see, $E_M$ and $E_M^+$ are basically sets of arguments which are induced by a set of atoms $M$.

\begin{theorem}\label{theorem:semi-models-semi-models}
Let $AF = \langle AR, attacks \rangle$ be an argumentation framework and $M \subseteq {\cal L}_{\Pi_{AF}}$. $M$ is a GL-supported model of $\Pi_{AF}$ iff $E_M$ is a semi-stable extension of $AF$.
\end{theorem}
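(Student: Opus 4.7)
My plan is to reduce the biconditional to two translations between the logic programming side and the argumentation side, and then to a short maximality argument. The first translation is already available in the literature: by \cite{OsoNivSan13ENC}, $M \mapsto E_M$ is a bijection between supported models of $\Pi_{AF}$ and complete extensions of $AF$, with inverse $E \mapsto \{def(x) \in {\cal L}_{\Pi_{AF}} : x \notin E\}$. By construction ${\cal L}_{\Pi_{AF}} \setminus M$ is therefore exactly $\{def(x) : x \in E_M\}$. I would take this bijection as a black box.

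The second, and key, translation is the identification $Facts((\Pi_{AF})^M) = \{def(a) : a \in E_M^+\}$, where $E_M^+$ on the right is the argumentation-theoretic range complement $\{a \in AR : \exists b \in E_M \text{ with } (b,a) \in attacks\}$. I would establish this by inspecting the GL-reduction clause by clause. A clause $def(a) \leftarrow not\ def(b)$ from $\Pi^-(a)$ becomes the fact $def(a)$ in $(\Pi_{AF})^M$ exactly when $def(b) \notin M$, i.e.\ when $b \in E_M$; so the $\Pi^-$-part contributes $def(a)$ as a fact precisely when some $b \in E_M$ attacks $a$. A clause from $\Pi^+(a)$ contains no negative literals, so it appears as a fact only when its body $\bigwedge_c def(c)$ is empty, which forces the corresponding attacker $b$ of $a$ to be itself unattacked; but every unattacked argument belongs to every complete extension, hence $b \in E_M$, so this fact is already accounted for by $\Pi^-$. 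Combining both parts yields the identification, and therefore $Facts((\Pi_{AF})^M) \cup ({\cal L}_{\Pi_{AF}} \setminus M) = \{def(x) : x \in E_M \cup E_M^+\}$.

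With both translations in hand, the theorem drops out. Since $x \mapsto def(x)$ is injective on the arguments occurring in $\Pi_{AF}$, strict inclusion on the LP-side set corresponds exactly to strict inclusion of the ranges, and the bijection of the first translation pairs each other supported model with another complete extension. Thus $M$ is GL-supported, that is, no other supported model yields a strictly larger LP-side set, if and only if no other complete extension gives a strictly larger range $E' \cup E'^+$, if and only if $E_M$ is semi-stable.

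The main obstacle is the clause-by-clause analysis underlying the second translation. The delicate point is showing that $\Pi^+$ never contributes a fact to $(\Pi_{AF})^M$ beyond what $\Pi^-$ already supplies, which relies crucially on the observation that unattacked arguments lie in every complete extension. A minor subsidiary worry is handling arguments that never appear in ${\cal L}_{\Pi_{AF}}$ (e.g.\ arguments with no attacks in either direction): one must verify that the bijection and the range identification still go through once those are accounted for, either by restricting attention to arguments actually mentioned in the program or by extending ${\cal L}_{\Pi_{AF}}$ to cover all of $AR$.
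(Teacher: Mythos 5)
Your proof follows the same route as the paper's: translate supported models to complete extensions via the result of \cite{OsoNivSan13ENC}, identify $Facts((\Pi_{AF})^M)\cup({\cal L}_{\Pi_{AF}}\setminus M)$ with the range $E_M\cup E_M^+$, and then transfer maximality across this correspondence. The only difference is that you actually carry out the clause-by-clause verification that $Facts((\Pi_{AF})^M)=\{def(a): a\in E_M^+\}$ (including why $\Pi^+$ contributes no facts beyond those supplied by $\Pi^-$, and the caveat about arguments absent from ${\cal L}_{\Pi_{AF}}$), a step the paper dispatches with ``it is not hard to see''; your version is correct and, if anything, more complete.
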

\begin{proof}
Let us start introducing the following result from \cite{OsoNivSan13ENC}:
\begin{description}
  \item [$\mathbf{R1}$:] Let $AF = \langle AR, attacks \rangle$ be an argumentation framework. $M$ is a supported model of $\Pi_{AF}$ iff $E_M$ is a complete extension of $\Pi_{AF}$.
\end{description}

The proof goes as follows:
\begin{description}
  \item[=>] Let $M$ be a GL-supported model of $\Pi_{AF}$ and $M^* = Facts((\Pi_{AF})^M)$. Then by definition of a GL-supported model, $M^* \cup {\cal L}_{\Pi_{AF}} \setminus M$ is maximal \wrt set inclusion.  Moreover, $M$ is a supported model. Therefore, by $\mathbf{R1}$, $E_M$ is a complete extension. Hence, it is not hard to see that  $E_M \cup E_M^+$ is a range with respect to the complete extension $E_M$. Since $M^* \cup {\cal L}_{\Pi_{AF}} \setminus M$ is maximal \wrt set inclusion, $E_M \cup E_M^+$ is also maximal \wrt set inclusion. Hence, $E_M$ is a semi-stable extension.
  \item[<=] Let us suppose that $E$ is a semi-stable extension of $AF$. By definition $E \cup E^+$ is maximal \wrt set inclusion and $E$ is a complete extension. By $\mathbf{R1}$,  there exists a supported model $M$ of $\Pi_{AF}$ such that $E_M = E$; moreover, any supported model $N$ of $\Pi_{AF}$ has the property that $E_N \cup E_N^+$ is maximal \wrt set inclusion. Then ${\cal L}_{\Pi_{AF}} \setminus N \cup Facts((\Pi_{AF})^N)$ is maximal \wrt set inclusion. Then $N$ is a GL-supported model of $\Pi_{AF}$.
\end{description}
\end{proof}

An interesting property of GL-supported models is that they can be characterized by both the set of p-stable models of $\Pi_{AF}$ and the set of 2-valued models of $\Pi_{AF}$.

\begin{proposition}\label{prop:CharacterizationSemi-model}
Let $AF = \langle AR, attacks \rangle$ be an argumentation framework.
\begin{enumerate}
\item $M$ is a GL-supported model of $\Pi_{AF}$ iff $E_M \cup E_M^+$ is maximal \wrt set inclusion where $M$ is a 2-valued model of $\Pi_{AF}$.
  \item $M$ is a GL-supported model of $\Pi_{AF}$ iff $E_M \cup E_M^+$ is maximal \wrt set inclusion where $M$ is a p-stable model $\Pi_{AF}$.
\end{enumerate}
\end{proposition}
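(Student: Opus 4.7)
The plan is to reduce both claims to Theorem~\ref{theorem:semi-models-semi-models}, which already identifies the GL-supported models of $\Pi_{AF}$ with the semi-stable extensions of $AF$ via the correspondence $M\mapsto E_M$. Under that identification, (1) and (2) become purely argumentation-theoretic assertions characterizing semi-stable extensions in terms of maximal-range admissible sets and maximal-range preferred extensions, respectively. These pass through the already-established facts recalled earlier in the paper, namely that the 2-valued models of $\Pi_{AF}$ characterize admissible sets and that the p-stable models of $\Pi_{AF}$ characterize preferred extensions.

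For claim (1), once one invokes ``2-valued models $\leftrightarrow$ admissible sets'' (observed in Section~\ref{sec:mapping}), the statement reads: an admissible set $E$ has $E\cup E^+$ maximal \wrt set inclusion among ranges of admissible sets iff $E$ is semi-stable. The ``if'' direction is easy: a semi-stable $E$ is complete with maximal range among complete extensions, and since any admissible $A$ extends to a complete $C$ with $A\cup A^+\subseteq C\cup C^+$, maximality among complete extensions lifts to maximality among admissible sets. The interesting direction is ``only if''. Starting from an admissible $E$ whose range is maximal among admissible sets, I would embed $E$ into a preferred (hence complete) extension $F\supseteq E$. Then $E\cup E^+\subseteq F\cup F^+$ and maximality forces equality; but $F$ is conflict-free and $E\subseteq F$, so $F\cap E^+=\emptyset$, which combined with $F\cup F^+=E\cup E^+$ forces $F\setminus E=\emptyset$, i.e.\ $F=E$. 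Hence $E$ is already complete, and its range is in particular maximal among complete extensions, so $E$ is semi-stable.

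For claim (2), I would invoke the correspondence ``p-stable models of $\Pi_{AF}$ $\leftrightarrow$ preferred extensions'' mentioned in the introduction, after which (2) reads: a preferred extension $E$ has $E\cup E^+$ maximal among ranges of preferred extensions iff $E$ is semi-stable. Both directions follow by a standard extension argument. Since every semi-stable extension is preferred and has maximal range among all complete extensions, it certainly has maximal range among preferred extensions. Conversely, given a preferred $E$ whose range is maximal among preferred extensions, any complete $C$ with strictly larger range could be embedded into a preferred $C'\supseteq C$, giving $C'\cup C'^+\supseteq C\cup C^+\supsetneq E\cup E^+$ and contradicting maximality of the range of $E$; so $E$ actually has maximal range among complete extensions and is therefore semi-stable.

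The main obstacle I expect is the ``only if'' step of claim (1), where one must argue that maximality of range among admissible sets already forces completeness. The rest is bookkeeping: translating between the logic-programming side (supported/2-valued/p-stable models) and the argumentation side (complete/admissible/preferred extensions) via the map $N\mapsto E_N$, and then appealing to Theorem~\ref{theorem:semi-models-semi-models} to conclude.
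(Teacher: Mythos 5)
Your proposal is correct and follows the same overall decomposition as the paper: translate via the model correspondences (2-valued models of $\Pi_{AF}$ $\leftrightarrow$ admissible sets, p-stable models of $\Pi_{AF}$ $\leftrightarrow$ preferred extensions), reduce to Theorem \ref{theorem:semi-models-semi-models}, and then settle the purely argumentation-theoretic equivalences. The one genuine difference is that the paper disposes of those equivalences by citation --- Proposition 4 of Caminada et al.\ for ``maximal-range admissible iff maximal-range complete'' and Proposition 13 of Baroni et al.\ for ``semi-stable iff preferred with maximal range'' --- whereas you prove both from scratch. Your arguments are sound: the ``only if'' step of claim (1), where you embed an admissible $E$ with maximal range into a preferred $F$ and use conflict-freeness of $F$ together with $F\cup F^+=E\cup E^+$ to force $F=E$, is exactly the content of the cited Caminada result, and the lifting arguments in both directions rest only on monotonicity of the range under inclusion and on the standard fact (Dung) that every admissible set is contained in a preferred extension --- a fact you use twice and should state explicitly. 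What your route buys is a self-contained proof independent of the two cited propositions; what the paper's route buys is brevity and an explicit link to the established literature on semi-stable semantics.
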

\begin{proof}
We start introducing the following observations from the state of the art:

\begin{enumerate}
  \item Let $M \subseteq {\cal L}_{\Pi_{AF}}$. $M$ is a 2-valued model of $\Pi_{AF}$ iff  $E_M$ is an admissible extension of $AF$ \cite{NivOso14}.
  \item According to Proposition 4 by \cite{CaminadaCD12} the following statements are equivalent:
      \begin{enumerate}
        \item $E$ is a complete extension such that $E \cup E^+$ is maximal (\wrt set inclusion).
        \item $E$ is an admissible set such that $E \cup E^+$ is maximal (\wrt set inclusion).
      \end{enumerate}
  \item Let $M \subseteq {\cal L}_{\Pi_{AF}}$. $M$ is a p-stable model of $\Pi_{AF}$ iff $E_M$ is a preferred  extension of $AF$ \cite{CarNivOso09}.
\end{enumerate}

Now let us prove each of the points of the proposition:

\begin{enumerate}
  \item  $M$ is a GL-supported model of $\Pi_{AF}$ iff $Facts((\Pi_{AF})^M) \cup \{ {\cal L}_{\Pi_{AF}} \setminus M \}$ is maximal \wrt set inclusion and $M$ is a supported model. By Theorem \ref{theorem:semi-models-semi-models}, $E_M \cup E_M^+$ is maximal \wrt set inclusion and $M$ is a supported model iff $E_M \cup E_M^+$ is maximal and $E_M$ is a complete extension of $AF$. By Observation 2, $E_M \cup E_M^+$ is maximal and $E_M$ is a complete extension of $AF$ iff $E_M \cup E_M^+$ is maximal and $E_M$ is an admissible extension of $AF$. Hence, the result follows by Observation 1 which argues that any 2-valued model of $\Pi_{AF}$ characterizes an admissible set of $AF$.
  \item Let us start by observing that semi-stable extensions can be characterized by preferred extensions with maximal range which means: $E$ is a semi-stable extension iff $E \cup E^+$ is maximal (\wrt set inclusion) and $E$ is a preferred extension (see Proposition 13 from \cite{BaroniCG11}). Hence, the result follows by Observation 3 and Theorem \ref{theorem:semi-models-semi-models}.
\end{enumerate}
\end{proof}

A direct consequence of Proposition \ref{prop:CharacterizationSemi-model} and Theorem \ref{theorem:semi-models-semi-models} is the following corollary which introduces a pair of characterizations of semi-stable extensions as 2-valued models and p-stable models of $\Pi_{AF}$.

\begin{corollary}\label{corol:bound}
Let $AF = \langle AR, attacks \rangle$ be an argumentation framework. 
\begin{enumerate}
  \item Let $M$ be a p-stable model of $\Pi_{AF}$. $E_M$ is a semi-stable extension of $AF$ iff $E_M \cup E_M^+$ is maximal \wrt set inclusion.
\item Let $M$ be a 2-valued model of $\Pi_{AF}$. $E_M$ is a semi-stable extension of $AF$ iff $E_M \cup E_M^+$ is maximal \wrt set inclusion.
\end{enumerate}
\end{corollary}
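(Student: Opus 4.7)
The plan is to observe that both statements chain two biconditionals: Theorem~\ref{theorem:semi-models-semi-models} converts the GL-supported-model condition into "$E_M$ is a semi-stable extension," and the two parts of Proposition~\ref{prop:CharacterizationSemi-model} each convert the GL-supported-model condition into the maximality of $E_M \cup E_M^+$ within a designated class of models (p-stable models for part~1 of the corollary, 2-valued models for part~2). So the whole argument is essentially a transitive composition of results already established in this section.

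Concretely, for part~1 I would fix a p-stable model $M$ of $\Pi_{AF}$ and argue: by Theorem~\ref{theorem:semi-models-semi-models}, $E_M$ is a semi-stable extension of $AF$ iff $M$ is a GL-supported model of $\Pi_{AF}$; then by Proposition~\ref{prop:CharacterizationSemi-model}(2), $M$ is a GL-supported model iff $E_M \cup E_M^+$ is maximal (w.r.t.\ set inclusion) among the sets $E_N \cup E_N^+$ with $N$ ranging over p-stable models of $\Pi_{AF}$. Composing these gives the claimed biconditional. Part~2 is identical in structure but uses Proposition~\ref{prop:CharacterizationSemi-model}(1) in place of part~(2), replacing the p-stable quantification with a 2-valued model quantification.

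The only delicate point—and what I would flag as the main obstacle, though it is very minor—is a sanity check that the universe of quantification for "maximal" in the corollary matches the universe used in Proposition~\ref{prop:CharacterizationSemi-model}. Both parts of the proposition already phrase maximality over the specified class (p-stable or 2-valued), and semi-stability only requires the maximality of $E \cup E^+$ among \emph{complete} (equivalently, admissible) extensions; so one should briefly note that every p-stable model (respectively, every 2-valued model) of $\Pi_{AF}$ corresponds to a preferred (respectively, admissible) extension of $AF$, which in both cases is an admissible set, so that the maximality ranges agree with the one used in Definition~\ref{def:semArg}. Those correspondences are exactly the observations recalled in the proof of Proposition~\ref{prop:CharacterizationSemi-model}, so nothing new needs to be proved.

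Given this, the final write-up would be just a couple of lines per part: state the two biconditionals, cite the result that gives each, and compose. No case analysis, induction, or calculation is needed, which matches the description of the result as a "direct consequence."
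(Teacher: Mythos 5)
Your proposal is correct and is exactly the route the paper intends: the corollary is stated as a direct consequence of Theorem~\ref{theorem:semi-models-semi-models} and Proposition~\ref{prop:CharacterizationSemi-model}, obtained by composing the two biconditionals just as you describe (and the paper gives no further proof). Your sanity check on the universe of quantification for maximality is a reasonable extra precaution, already covered by the observations recalled in the proof of the proposition.
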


Observing Corollary \ref{corol:bound}, we can see that there is an interval of logic programming semantics which characterizes semi-stable extensions. This interval of logic programming semantics is defined by the order-relation between logic programming semantics: \emph{stronger than}.
This result is formalized by the following corollary.

\begin{corollary}\label{corol:Interval-Semi-stable}
Let $AF = \langle AR, attacks \rangle$ be an argumentation framework and $SEM$ be a logic programming semantics such that $SEM$ is stronger than $2SEM$ and $P\text{-}stable$ is stronger than $SEM$. If $M \in SEM(\Pi_{AF})$, then $E_M$ is a semi-stable extension of $AF$ iff $E_M \cup E_M^+$ is maximal \wrt set inclusion.
\end{corollary}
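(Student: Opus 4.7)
The plan is to recognize that this corollary is essentially a repackaging of Corollary \ref{corol:bound} using the ordering on logic programming semantics introduced at the end of Section \ref{subsec:Pstable}. Concretely, unwinding the definition of \emph{stronger than}, the hypothesis ``$SEM$ is stronger than $2SEM$'' means $SEM(P) \subseteq 2SEM(P)$ for every logic program $P$. Instantiating at $P = \Pi_{AF}$, any $M \in SEM(\Pi_{AF})$ is automatically a 2-valued model of $\Pi_{AF}$.

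Once this inclusion is in hand, I would simply invoke Corollary \ref{corol:bound}(2), which states verbatim that for any 2-valued model $M$ of $\Pi_{AF}$, $E_M$ is a semi-stable extension of $AF$ iff $E_M \cup E_M^+$ is maximal \wrt set inclusion. This gives the required equivalence with no additional work. Hence only the upper bound of the interval (\emph{stronger than} $2SEM$) is actually needed to derive the iff for a fixed $M$.

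I would then add a short remark explaining the role of the lower bound ``$P\text{-}stable$ is stronger than $SEM$'', since it does not appear in the chain of implications above. Its purpose is to guarantee that the interval is populated enough to witness every semi-stable extension: combining Corollary \ref{corol:bound}(1) with $P\text{-}stable(\Pi_{AF}) \subseteq SEM(\Pi_{AF})$ ensures that for each semi-stable extension $E$ of $AF$ there exists some $M \in SEM(\Pi_{AF})$ with $E_M = E$. In other words, the two bounds play complementary roles: the upper bound secures soundness of the characterization, and the lower bound secures its completeness with respect to the family of semi-stable extensions.

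I do not anticipate any real obstacle here; the statement is a direct consequence of Corollary \ref{corol:bound}(2) together with the definition of \emph{stronger than}. The only item worth checking carefully is the direction of the inclusion in the definition of \emph{stronger than}, since a sign error there would make the upper and lower bounds swap roles and break the argument.
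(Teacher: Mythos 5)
Your proposal is correct and matches the paper's (implicit) derivation: the paper offers no explicit proof and simply presents this corollary as a formalization of Corollary \ref{corol:bound}, which is exactly your route via $SEM(\Pi_{AF}) \subseteq 2SEM(\Pi_{AF})$ and Corollary \ref{corol:bound}(2). Your side observation that the hypothesis ``$P\text{-}stable$ is stronger than $SEM$'' is not needed for the stated implication, but only to ensure the characterization is complete with respect to the set of semi-stable extensions, is accurate and worth keeping.
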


Given the relation of semi-stable extensions with the stable and preferred extensions, we can observe some relations between GL-supported models \wrt the stable model semantics \cite{GelLif88} and p-stable semantics.

\begin{proposition}
Let $AF = \langle AR, attacks \rangle$ be an argumentation framework.

\begin{enumerate}
  \item If $M$ is a stable model of $\Pi_{AF}$ then  $M$ is a GL-supported model of $\Pi_{AF}$.
  \item If $M$ is a GL-supported model of $\Pi_{AF}$ then  $M$ is a p-stable model of $\Pi_{AF}$.
\end{enumerate}
\end{proposition}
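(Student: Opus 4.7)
The plan is to reduce both implications to known correspondences between the models of $\Pi_{AF}$ and the extensions of $AF$, then exploit the standard inclusions among stable, semi-stable, and preferred extensions. Specifically, Theorem~\ref{theorem:semi-models-semi-models} already tells us that $M$ is a GL-supported model of $\Pi_{AF}$ iff $E_M$ is a semi-stable extension of $AF$, so each part reduces to comparing semi-stable extensions with the relevant class on the other side.

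For part 1, I would first invoke the known correspondence (essentially due to Dung) that $M$ is a stable model of $\Pi_{AF}$ iff $E_M$ is a stable extension of $AF$. Then I would observe that every stable extension $E$ satisfies $E \cup E^+ = AR$, which is trivially maximal \wrt set inclusion among subsets of $AR$; hence every stable extension is semi-stable. Combining this with Theorem~\ref{theorem:semi-models-semi-models} finishes part 1: $E_M$ semi-stable gives $M$ GL-supported.

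For part 2, I would start from the assumption that $M$ is a GL-supported model and apply Theorem~\ref{theorem:semi-models-semi-models} to conclude that $E_M$ is a semi-stable extension of $AF$. Next I would use the standard argumentation-theoretic fact that every semi-stable extension is a preferred extension (a complete extension with maximal range is in particular a maximal admissible set). Finally, I would invoke Observation~3 from the proof of Proposition~\ref{prop:CharacterizationSemi-model}, namely that $M$ is a p-stable model of $\Pi_{AF}$ iff $E_M$ is a preferred extension, to obtain that $M \in P\text{-}stable(\Pi_{AF})$.

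There is no real obstacle here: both claims follow mechanically once the semi-stable characterization of GL-supported models (Theorem~\ref{theorem:semi-models-semi-models}) is in place. The only care required is to explicitly cite the two auxiliary correspondences, \emph{stable models of $\Pi_{AF}$ correspond to stable extensions} and \emph{p-stable models of $\Pi_{AF}$ correspond to preferred extensions}, together with the elementary extension inclusions stable $\subseteq$ semi-stable $\subseteq$ preferred. Since the bijection $M \mapsto E_M$ between $M \subseteq \mathcal{L}_{\Pi_{AF}}$ and subsets of $AR$ is transparent from the definition of $E_M$, no further bookkeeping is needed.
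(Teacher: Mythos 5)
Your proof is correct and follows essentially the same route as the paper: the paper's own (very terse) proof combines Theorem~\ref{theorem:semi-models-semi-models} with the standard extension inclusions stable $\subseteq$ semi-stable and semi-stable $\subseteq$ preferred (Theorems 2 and 3 of Caminada et al.), exactly the chain you spell out. If anything, you are more explicit than the paper in citing the auxiliary correspondences (stable models of $\Pi_{AF}$ with stable extensions, p-stable models with preferred extensions), which the paper leaves implicit here and only records elsewhere.
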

\begin{proof}
\begin{enumerate}
  \item It follows from Theorem \ref{theorem:semi-models-semi-models} and Theorem 2 by \cite{CaminadaCD12}.
  \item It follows from Theorem \ref{theorem:semi-models-semi-models} and Theorem 3 by \cite{CaminadaCD12}.
\end{enumerate}
\end{proof}

\begin{proposition}
Let $AF = \langle AR, attacks \rangle$ be an argumentation framework such that $Stable(\Pi_{AF}) \neq \emptyset$. Then, $Stable(\Pi_{AF}) = GLModels(\Pi_{AF})$.
\end{proposition}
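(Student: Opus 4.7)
The plan is to combine the previous proposition, which already yields $Stable(\Pi_{AF}) \subseteq GLModels(\Pi_{AF})$ unconditionally, with the classical argumentation-theoretic collapse: whenever an $AF$ admits at least one stable extension, every semi-stable extension is already stable. I would then transport this collapse through the correspondence established in Theorem~\ref{theorem:semi-models-semi-models}.

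First, the inclusion $Stable(\Pi_{AF}) \subseteq GLModels(\Pi_{AF})$ is immediate from part~(1) of the preceding proposition, and uses nothing about the hypothesis $Stable(\Pi_{AF}) \neq \emptyset$. For the reverse inclusion, let $M \in GLModels(\Pi_{AF})$. By Theorem~\ref{theorem:semi-models-semi-models}, $E_M$ is a semi-stable extension of $AF$. The hypothesis gives some $N \in Stable(\Pi_{AF})$; invoking the well-known correspondence between the stable model semantics of $\Pi_{AF}$ and the stable extensions of $AF$ (cited in the paper's introduction and related work), $E_N$ is a stable extension of $AF$, so $E_N \cup E_N^+ = AR$. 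Since $E_M$ is semi-stable its range is maximal (w.r.t.\ set inclusion) among ranges of complete extensions, yet $E_M \cup E_M^+ \subseteq AR = E_N \cup E_N^+$; hence $E_M \cup E_M^+ = AR$, which means $E_M$ attacks every argument outside itself and is therefore a stable extension. Transporting back through the same $M \leftrightarrow E_M$ correspondence yields $M \in Stable(\Pi_{AF})$.

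The main obstacle is ensuring that the correspondence $M \mapsto E_M$ behaves correctly on both endpoints of the argument: besides carrying GL-supported models to semi-stable extensions (which is the content of Theorem~\ref{theorem:semi-models-semi-models}), it must also carry stable models of $\Pi_{AF}$ \emph{exactly} to stable extensions of $AF$, so that the final translation from ``$E_M$ is a stable extension'' back to ``$M \in Stable(\Pi_{AF})$'' is sound and not merely a one-way implication. This is a standard result about $\Pi_{AF}$ in the literature cited by the paper; I would state it as a preliminary observation before using it, to keep the transfer between the logic programming and argumentation sides fully rigorous. Everything else in the proof is a short diagram chase through already established equivalences.
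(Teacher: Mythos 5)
Your proposal is correct and takes essentially the same route as the paper: both reduce the claim to the classical coincidence of stable and semi-stable extensions when at least one stable extension exists, combined with Theorem~\ref{theorem:semi-models-semi-models} and the bijective correspondence between stable models of $\Pi_{AF}$ and stable extensions of $AF$. The only difference is that you prove the coincidence inline via the maximal-range argument (and correctly flag the need for the correspondence to be two-way), whereas the paper simply cites it as Theorem~5 of Caminada et al.\ together with Theorem~5 of Carballido et al.
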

\begin{proof}
We know that $E$ is a stable extension of $AF$  iff $E = E_M$  where  $M$  is a stable model of $\Pi_{AF}$ (Theorem 5 by \cite{CarNivOso09}). Hence, the result follows from Theorem \ref{theorem:semi-models-semi-models} and Theorem 5 by  \cite{CaminadaCD12}.
\end{proof}

\subsection{Stage Semantics}
We have seen that the idea of range \wrt complete extensions can be captured by instantiating the schema $SC_1$ considering supported-models, the GL-reduction and $\Pi_{AF}$.

In Section \ref{sec:mapping}, the mappings $\Pi_{AF}^-$ and  $\Pi_{AF}$ were introduced. We have observed that $\Pi_{AF}^-$ is basically a declarative specification of conflict-free sets.
Given that stage semantics is based on conflict-free sets, we will consider  $\Pi_{AF}^-$ for instantiating $SC_1$ and defining the so-called \emph{GL-stage models}:

\begin{definition}
Let $AF = \langle AR, attacks \rangle$ be an argumentation framework and $M$ be a 2-valued model of $\Pi_{AF}^-$. $M$ is a GL-stage model of $\Pi_{AF}^-$ if
$Facts((\Pi_{AF}^-)^M) \cup \{ {\cal L}_{\Pi_{AF}^-} \setminus M \}$
is maximal \wrt set inclusion.
\end{definition}

In other words, a 2-valued model $M$  of $\Pi_{AF}^-$ is a GL-stage-model if for every 2-valued model $N$ of $\Pi_{AF}^-$ such that $N$ is different of $M$, $SC'(\Pi_{AF}^-, M) \not\subset SC'(\Pi_{AF}^-, N)$ where $SC'(\Pi_{AF}^-, X) = Facts((\Pi_{AF}^-)^X) \cup \{ {\cal L}_{\Pi_{AF}^-} \setminus X \}$.

In this characterization of $SC_1$, once again we are replacing the function $R$ of  $SC_1$ by the GL-reduction; however, one can use RED reduction for defining GL-stage models.

One can observe that GL-stage models characterize stage extensions. In order to formalize this result, the following notation is introduced: Let $E'_M =  \{x | def(x) \in {\cal L}_{\Pi_{AF}^-} \setminus M\}$ and $E_M^{'+} = \{ x | def(x) \in Facts((\Pi_{AF}^-)^M) \}$ where $M \subseteq {\cal L}_{\Pi_{AF}^-}$. Like $E_M$ and $E_M^{+}$, $E'_M$ and $E_M^{'+}$ return sets of arguments given a set of atoms $M$  from $\Pi_{AF}^-$.

\begin{theorem}\label{theorem:stage-models}
Let $AF :=\langle AR, attacks \rangle $ be an argumentation framework. $M$ is a GL-stage model of $\Pi_{AF}^-$ iff $E'_M $ is a stage extension of $AF$.
\end{theorem}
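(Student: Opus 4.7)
The plan is to mirror the argument used for Theorem \ref{theorem:semi-models-semi-models}, replacing the role played by supported models and complete extensions by that played by arbitrary $2$-valued models and conflict-free sets. The only piece of background needed is the analog of $\mathbf{R1}$ at the conflict-free level, which has already been stated in the paper just after Definition \ref{def:mappings}: for any $M \subseteq {\cal L}_{\Pi_{AF}^-}$, $M$ is a $2$-valued model of $\Pi_{AF}^-$ if and only if $E'_M$ is a conflict-free set of $AF$. I will refer to this as $\mathbf{R1'}$.

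First I would unpack the two key sets on the logic-programming side in terms of arguments. Looking at $\Pi^-(a)$, the clause $def(a) \leftarrow not\ def(b)$ survives the GL-reduction with respect to $M$ exactly when $def(b) \notin M$, i.e.\ when $b \notin E'_M$, and in that case it becomes the fact $def(a)$. Hence $E'^+_M = \{ x \mid def(x) \in Facts((\Pi_{AF}^-)^M)\}$ coincides with the set of arguments of $AF$ attacked by some element of $E'_M$, which is the range-complement of $E'_M$ in Dung's sense. Likewise ${\cal L}_{\Pi_{AF}^-} \setminus M$ is just the atomic encoding of $E'_M$. Putting these together, the set $SC'(\Pi_{AF}^-,M) = Facts((\Pi_{AF}^-)^M) \cup \{{\cal L}_{\Pi_{AF}^-} \setminus M\}$ is in an order-preserving bijection (via $def(\cdot)$) with $E'_M \cup E'^+_M$, so maximality \wrt{} set inclusion is transported across.

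For the ($\Rightarrow$) direction, assume $M$ is a GL-stage model. By definition $M$ is a $2$-valued model of $\Pi_{AF}^-$, so $\mathbf{R1'}$ gives $E'_M$ conflict-free. By the correspondence above and the maximality of $SC'(\Pi_{AF}^-,M)$ among all $SC'(\Pi_{AF}^-,N)$ with $N$ a $2$-valued model of $\Pi_{AF}^-$, the range $E'_M \cup E'^+_M$ is maximal among $E'_N \cup E'^+_N$ for $N$ a $2$-valued model of $\Pi_{AF}^-$. Using $\mathbf{R1'}$ again to let $N$ range over all conflict-free sets of $AF$, we conclude that $E'_M \cup E'^+_M$ is maximal \wrt{} set inclusion over conflict-free ranges, i.e.\ $E'_M$ is a stage extension (Definition \ref{def:StageSem}). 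The ($\Leftarrow$) direction is symmetric: given a stage extension $E$, use $\mathbf{R1'}$ to obtain a $2$-valued model $M$ with $E'_M = E$, then transport the maximality of $E \cup E^+$ back to maximality of $SC'(\Pi_{AF}^-, M)$ to conclude that $M$ is a GL-stage model.

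The only genuinely non-routine step is the correspondence claim in the second paragraph, namely that $SC'(\Pi_{AF}^-,M)$ encodes exactly $E'_M \cup E'^+_M$ and does so in an order-preserving way. This requires checking that the only facts produced by $(\Pi_{AF}^-)^M$ are those of the form $def(x)$ with $x$ attacked by some element of $E'_M$, which in turn uses the very simple shape of $\Pi^-(a)$ (each clause has at most one negative body literal and no positive body literals). Once this is in place, everything else is a straightforward translation through $\mathbf{R1'}$ and Definition \ref{def:StageSem}.
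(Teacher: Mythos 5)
Your proposal is correct and follows essentially the same route as the paper's own proof: both rest on the observation that $M$ is a 2-valued model of $\Pi_{AF}^-$ iff $E'_M$ is conflict-free (your $\mathbf{R1'}$, the paper's $\mathbf{O1}$), and both transport maximality of $Facts((\Pi_{AF}^-)^M) \cup \{{\cal L}_{\Pi_{AF}^-}\setminus M\}$ to maximality of the range $E'_M \cup E_M^{'+}$ in each direction. Your explicit verification that $E_M^{'+}$ is exactly $(E'_M)^+$ (by inspecting how clauses $def(a) \leftarrow not\ def(b)$ survive the GL-reduction) spells out a step the paper compresses into ``one can see that $E'_M \cup E_M^{'+}$ is a range,'' but it is the same argument.
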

\begin{proof}
Let us start with one observation:
\begin{description}
  \item[$\textbf{O1}$:] Let $AF = \langle AR, attacks \rangle$ be an argumentation framework.  $E'_M$ is a conflict-free set of $AF$ iff $M$ is a 2-valued model of  $\Pi_{AF}^-$.
\end{description}

\begin{description}
  \item[=>] Let $M$ be a GL-stage model of $\Pi_{AF}^-$ and $M^* = Facts((\Pi_{AF}^-)^M)$. Then by definition of a GL-stage model, $M^* \cup {\cal L}_{\Pi_{AF}^-} \setminus M$ is maximal \wrt set inclusion and $M$ is a 2-valued model. Hence, by $\mathbf{O1}$, $E'_M$ is a conflict-free set. One can see that  $E'_M \cup E_M^{'+}$ is a range with respect to the conflict-free set $E'_M$. Since $M^* \cup {\cal L}_{\Pi_{AF}^-} \setminus M$ is maximal \wrt set inclusion, $E'_M \cup E_M^{'+}$ is also maximal \wrt set inclusion. Hence, $E_M$ is a stage extension.
  \item[<=] Let us suppose that $E$ is a stage extension of $AF$. By definition $E \cup E^+$ is maximal \wrt set inclusion and $E$ is a conflict-free set. By $\mathbf{O1}$,  there exists a 2-valued model $M$ of $\Pi_{AF}^-$ such that $E'_M = E$; moreover, any 2-valued model $N$ of $\Pi_{AF}^-$ has the property that $E'_N \cup E_N^{'+}$ is maximal \wrt set inclusion. Then ${\cal L}_{\Pi_{AF}^-} \setminus N \cup Facts((\Pi_{AF}^-)^N)$ is maximal \wrt set inclusion. Then $N$ is a GL-stage model of $\Pi_{AF}^-$.
\end{description}

\end{proof}

Let us observe that $Facts((\Pi_{AF}^-)^M) \cup \{ {\cal L}_{\Pi_{AF}^-} \setminus M \}$, which is the key construction of GL-stable models, is basically characterizing ranges \wrt conflict-free sets.

Dvor{\'{a}}k and Woltran have shown that the decision problems of the credulous and sceptical inferences are of complexity $\Sigma_2^P$-hard  and  $\Pi_2^p$-hard, respectively, for both semi-stable and stage semantics \cite{DvorakW10}. Hence it is straightforward to observe that the decision problems of the credulous and sceptical inferences are of complexity $\Sigma_2^P$-hard  and  $\Pi_2^p$-hard, respectively, for both GL-supported models and GL-stage models. Let us remember that GL-supported models and GL-stage models are defined under the resulting class of programs of the mappings $\Pi_{AF}$ and $\Pi_{AF}^-$, respectively.

\section{Related work}\label{sec:relatedwork}

Dung showed that argumentation can be viewed as logic programming with \emph{negation as failure} and vice versa. This strong relationship between argumentation and logic programming has given way to intensive research in order to explore the relationship between argumentation and logic programming \cite{CamSaAlc13,CarNivOso09,Dung95,NivOsoZepCor05,NivOsoCor08,OsoNivSan13ENC,NivOsoZep11,Strass13,WuCG09}.
A basic requirement for exploring the relationship between argumentation and logic programming is to identify proper mappings which allow us to transform an argumentation framework into a logic program and vice versa. The flexibility of these mappings will frame the understanding of argumentation as logic programming (and vice versa). Therefore, defining simple and flexible mappings which regard argumentation as logic programming (and vice versa) will impact the use of logic programming in argumentation (and vice versa). Currently, we can find different mappings for regarding argumentation as logic programming (and vice versa) \cite{CamSaAlc13,CarNivOso09,Dung95,GabbayG09}. All of them offer different interpretations of argumentation as logic programming (and vice versa). Depending on these interpretations, one can identify direct relationships between argumentation inferences and logic programming inferences.

In this paper, we have limited our attention to the interpretation of argumentation as logic programming. In this sense, there are some characterizations of semi-stable inference as logic programming inference \cite{CamSaAlc13,Strass13}. Caminada \etal, \cite{CamSaAlc13}, showed that the semi-stable semantics can be characterized by the \emph{L-stable semantics} and the mapping $P_{AF}$ which is defined as follows: Given an argumentation framework $AF := \langle AR, attacks
\rangle$:

$$P_{AF} = \bigcup_{x\in AR} \{ x \leftarrow \bigwedge_{(y,x) \in attacks} not\; y \}$$

Unlike GL-supported models which are 2-valued models, the models of the L-stable semantics are 3-valued. Moreover, unlike $\Pi_{AF}$ which is a declarative specification of admissible sets, $P_{AF}$ is a declarative specification of conflict-free sets.

Strass \cite{Strass13} has also showed that the semi-stable semantics can by characterized by both the so-called \emph{L-supported models} and \emph{L-Stable models}. Unlike Caminada's characterization and our characterizations, Strass considered the mapping $\Pi_{AF}^-$. As we have observed in Section \ref{sec:mapping}, the clauses of $\Pi_{AF}^-$ are a subset of $\Pi_{AF}$ which is the mapping that we considered in both Theorem \ref{theorem:semi-models-semi-models} and Corollary \ref{corol:Interval-Semi-stable}.  It is worth mentioning that the mapping introduced by Dung \cite{Dung95} can be transformed into $\Pi_{AF}^-$.

We cannot argue that one characterization is better than the other; however, we can observe that all these characterizations, including the ones introduced in this paper, offer different interpretations of semi-stable inference. Moreover, given that semi-stable inference has been characterized in terms of  both L-stable semantics and L-supported modes, it seems that these logic programming semantics are related to GL-supported semantics.

In the literature, there are different characterizations of argumentation semantics in terms of logic programming semantics. A summary of these characterization is presented in Table \ref{table:summary}.

\begin{table}\caption{Characterization of argumentation semantics as logic programming inferences.}
\begin{tabular}{|p{2.5cm}|p{2.5cm}|p{2.5cm}|p{2.5cm}|}\hline
\textbf{Argumentation semantics} & \textbf{Logic programming semantics using $P_{AF}$} & \textbf{Logic programming semantics using $\Pi_{AF}$}  & \textbf{Logic programming semantics using $\Pi_{AF}^-$}\\ \hline
Grounded Semantics   &  Well-founded semantics \cite{CamSaAlc13}, the Kripke-Kleene model \cite{Strass13}      &  Well-founded semantics \cite{CarNivOso09}   & Well-founded semantics \cite{Dung95}, the Kripke-Kleene model \cite{Strass13}  \\ \hline
Stable Semantics   &  Stable model semantics \cite{CamSaAlc13,NivOsoZepCor05}, Supported models \cite{Strass13}  & Stable model semantics \cite{CarNivOso09}  & Stable models semantics \cite{Dung95}, Supported models \cite{Strass13} \\ \hline
Preferred Semantics   &  Regular semantics \cite{CamSaAlc13}, M-supported models, M-stable models \cite{Strass13}      & P-stable Semantics  \cite{CarNivOso09}&  M-supported models, M-stable models \cite{Strass13}\\ \hline
Complete Semantics   &  3-valued stable semantics  \cite{WuCG09,Strass13}, 3-valued supported models  \cite{Strass13}  & Supported Models \cite{OsoNivSan13ENC} & 3-valued stable semantics, 3-valued supported models  \cite{Strass13} \\ \hline
Semi-stable Semantics   &  L-Stable \cite{CamSaAlc13,Strass13}, L-Supported models \cite{Strass13}  &  GL-supported models (Theorem \ref{theorem:semi-models-semi-models})  & L-supported models, L-stable models \cite{Strass13} \\ \hline
Ideal Semantics   &       & $WFS^+$ \cite{NivOso14}  &  \\ \hline
CF2 Semantics   &       &    & $MM^*$ \cite{NivOsoZep11} \\ \hline
Stage Semantics   &       &  &  GL-stage models (Theorem \ref{theorem:stage-models})\\ \hline
\end{tabular}\label{table:summary}

\end{table}

Table \ref{table:summary} argues for a \emph{strong} relationship between argumentation inference and logic programming inference. Moreover, we can observe that the argumentation semantics which have been characterized by logic programming semantics have been studied from different points of view, \eg Labellings \cite{BaroniCG11}. This evidence argues that any \emph{well-defined argumentation semantics} must be characterized by a logic programming semantics.  However, further research is required in order to identify the necessary conditions which could support a basic definition of a \emph{Well-defined Non-monotonic Inference} of any argumentation semantics. These conditions can be identified in terms of \emph{non-monotonic reasoning properties} which have been explored in both fields argumentation and logic programming, \eg the property of \emph{relevance}  \cite{Cam06,NivOsoZep11}.

The exploration of argumentation as logic programming inference is not limited to the characterization of argumentation semantics in terms logic programming semantics. Since Dung's presented his seminal paper \cite{Dung95}, he showed that logic programming can support the construction of argumentation-based systems. Currently there are quite different logic-based argumentation engines which support the inference of argumentation semantics \cite{CharwatDGWW15,UweGagWol10,ToniSergot11}. It is well-known that the computational complexity of the decision problems of argumentation semantics ranges from NP-complete to $\Pi_2^{(p)}$-complete. In this setting, Answer Set Programming has consolidated as a strong approach for building argumentation-based systems \cite{CharwatDGWW15,UweGagWol10,ToniSergot11,NivOsoZepCor05}.

\section{Conclusions}

Currently, most of the well accepted argumentation semantics have been characterized as logic programming inference (Table \ref{table:summary}). This evidence argues that whenever a new semantics  appears, it is totally reasonable to search for a characterization of it as a logic programming inference.

According to Theorem \ref{theorem:semi-models-semi-models}, semi-stable semantics can share the same mapping (\ie $\Pi_{AF}$) with grounded, stable, preferred, complete and ideal semantics for being characterized as logic programming inference. This result argues that all these argumentation semantics can share the same interpretation of an argumentation framework as a logic program. Certainly, the logic programming semantics which are considered for characterizing these argumentation semantics share also a common interpretation of the argumentation inference which is restricted to the class of programs defined by $\Pi_{AF}$.
We have also showed that stage semantics can be also characterized by a logic programming semantics (Theorem \ref{theorem:stage-models}). This result argues that stage semantics has also logic programming foundations. Considering Theorem \ref{theorem:semi-models-semi-models} and Theorem \ref{theorem:stage-models}, we can give a positive answer to Q2.

An interesting observation, from the results of this paper, is that the concept of range which is fundamental for defining semi-stable and stage semantics can be captured from the logic programming point of view by considering $SC_1$ which can be based on well-acceptable reductions from logic programming. 
It is worth mentioning that reductions as GL and RED suggest some general rules for managing negation as failure. This evidence suggests that $SC_1$ defines an approach for answering $Q1$.

We argue that $SC_1$ suggests a generic approach for exploring the concept of range in two directions: to explore ranges as logic programming models and to explore new argumentation semantics based on both logic programming models and ranges.


\bibliographystyle{acmtrans}
\bibliography{papers_jcns}

\begin{thebibliography}{}

\bibitem[\protect\citeauthoryear{Baral}{Baral}{2003}]{Bar03}
{\sc Baral, C.} 2003.
\newblock {\em {K}nowledge {R}epresentation, {R}easoning and {D}eclarative
  {P}roblem {S}olving}.
\newblock Cambridge University Press, Cambridge.

\bibitem[\protect\citeauthoryear{Baroni, Caminada, and Giacomin}{Baroni
  et~al\mbox{.}}{2011}]{BaroniCG11}
{\sc Baroni, P.}, {\sc Caminada, M.}, {\sc and} {\sc Giacomin, M.} 2011.
\newblock An introduction to argumentation semantics.
\newblock {\em Knowledge Eng. Review\/}~{\em 26,\/}~4, 365--410.

\bibitem[\protect\citeauthoryear{Caminada}{Caminada}{2006}]{Cam06}
{\sc Caminada, M.} 2006.
\newblock {S}emi-{Stable} semantics.
\newblock In {\em Proceedings of COMMA}, {P.~E. Dunne} {and} {T.~J.
  Bench-Capon}, Eds. Vol. 144. IOS Press, 121--130.

\bibitem[\protect\citeauthoryear{Caminada, S{\'a}, and Alc{\^a}ntara}{Caminada
  et~al\mbox{.}}{2013}]{CamSaAlc13}
{\sc Caminada, M.}, {\sc S{\'a}, S.}, {\sc and} {\sc Alc{\^a}ntara, J.} 2013.
\newblock On the equivalence between logic programming semantics and
  argumentation semantics.
\newblock Technical Report ABDN-CS-13-01, University of Aberdeen.

\bibitem[\protect\citeauthoryear{Caminada, Carnielli, and Dunne}{Caminada
  et~al\mbox{.}}{2012}]{CaminadaCD12}
{\sc Caminada, M. W.~A.}, {\sc Carnielli, W.~A.}, {\sc and} {\sc Dunne, P.~E.}
  2012.
\newblock Semi-stable semantics.
\newblock {\em Journal of Logic and Computation\/}~{\em 22,\/}~5, 1207--1254.

\bibitem[\protect\citeauthoryear{Carballido, Nieves, and Osorio}{Carballido
  et~al\mbox{.}}{2009}]{CarNivOso09}
{\sc Carballido, J.~L.}, {\sc Nieves, J.~C.}, {\sc and} {\sc Osorio, M.} 2009.
\newblock Inferring {P}referred {E}xtensions by {P}stable {S}emantics.
\newblock {\em Iberoamerican Journal of Artificial Intelligence (Inteligencia
  Artificial) ISSN: 1137-3601, (doi: 10.4114/ia.v13i41.1029)\/}~{\em 13,\/}~41,
  38--53.

\bibitem[\protect\citeauthoryear{Charwat, Dvor{\'{a}}k, Gaggl, Wallner, and
  Woltran}{Charwat et~al\mbox{.}}{2015}]{CharwatDGWW15}
{\sc Charwat, G.}, {\sc Dvor{\'{a}}k, W.}, {\sc Gaggl, S.~A.}, {\sc Wallner,
  J.~P.}, {\sc and} {\sc Woltran, S.} 2015.
\newblock Methods for solving reasoning problems in abstract argumentation -
  {A} survey.
\newblock {\em Artificial Intelligence\/}~{\em 220}, 28--63.

\bibitem[\protect\citeauthoryear{Dung}{Dung}{1995}]{Dung95}
{\sc Dung, P.~M.} 1995.
\newblock On the acceptability of arguments and its fundamental role in
  nonmonotonic reasoning, logic programming and n-person games.
\newblock {\em Artificial Intelligence\/}~{\em 77,\/}~2, 321--358.

\bibitem[\protect\citeauthoryear{Dvor{\'{a}}k and Woltran}{Dvor{\'{a}}k and
  Woltran}{2010}]{DvorakW10}
{\sc Dvor{\'{a}}k, W.} {\sc and} {\sc Woltran, S.} 2010.
\newblock Complexity of semi-stable and stage semantics in argumentation
  frameworks.
\newblock {\em Inf. Process. Lett.\/}~{\em 110,\/}~11, 425--430.

\bibitem[\protect\citeauthoryear{Egly, Alice~Gaggl, and Woltran}{Egly
  et~al\mbox{.}}{2010}]{UweGagWol10}
{\sc Egly, U.}, {\sc Alice~Gaggl, S.}, {\sc and} {\sc Woltran, S.} 2010.
\newblock Answer-set programming encodings for argumentation frameworks.
\newblock {\em Argument and Computation\/}~{\em 1,\/}~2, 147--177.

\bibitem[\protect\citeauthoryear{Gabbay and d'Avila Garcez}{Gabbay and d'Avila
  Garcez}{2009}]{GabbayG09}
{\sc Gabbay, D.~M.} {\sc and} {\sc d'Avila Garcez, A.~S.} 2009.
\newblock Logical modes of attack in argumentation networks.
\newblock {\em Studia Logica\/}~{\em 93,\/}~2-3, 199--230.

\bibitem[\protect\citeauthoryear{Gelfond}{Gelfond}{2008}]{Gel08}
{\sc Gelfond, M.} 2008.
\newblock {\em Handbook of Knowledge Representation}.
\newblock Elsevier, Chapter {A}nswer {S}ets, 285--316.

\bibitem[\protect\citeauthoryear{Gelfond and Lifschitz}{Gelfond and
  Lifschitz}{1988}]{GelLif88}
{\sc Gelfond, M.} {\sc and} {\sc Lifschitz, V.} 1988.
\newblock {T}he {S}table {M}odel {S}emantics for {L}ogic {P}rogramming.
\newblock In {\em 5th Conference on Logic Programming}, {R.~Kowalski} {and}
  {K.~Bowen}, Eds. MIT Press, 1070--1080.

\bibitem[\protect\citeauthoryear{Nieves and Osorio}{Nieves and
  Osorio}{2014}]{NivOso14}
{\sc Nieves, J.~C.} {\sc and} {\sc Osorio, M.} 2014.
\newblock Ideal extensions as logical programming models.
\newblock {\em Journal of Logic and Computation\/}~{\em
  DOI:10.1093/logcom/exu014}.

\bibitem[\protect\citeauthoryear{Nieves, Osorio, and Cort\'es}{Nieves
  et~al\mbox{.}}{2008}]{NivOsoCor08}
{\sc Nieves, J.~C.}, {\sc Osorio, M.}, {\sc and} {\sc Cort\'es, U.} 2008.
\newblock Preferred {E}xtensions as {S}table {M}odels.
\newblock {\em Theory and Practice of Logic Programming\/}~{\em 8,\/}~4 (July),
  527--543.

\bibitem[\protect\citeauthoryear{Nieves, Osorio, and Zepeda}{Nieves
  et~al\mbox{.}}{2011}]{NivOsoZep11}
{\sc Nieves, J.~C.}, {\sc Osorio, M.}, {\sc and} {\sc Zepeda, C.} 2011.
\newblock {A Schema for Generating Relevant Logic Programming Semantics and its
  Applications in Argumentation Theory}.
\newblock {\em Fundamenta Informaticae\/}~{\em 106,\/}~2-4, 295--319.

\bibitem[\protect\citeauthoryear{Nieves, Osorio, Zepeda, and Cort\'es}{Nieves
  et~al\mbox{.}}{2005}]{NivOsoZepCor05}
{\sc Nieves, J.~C.}, {\sc Osorio, M.}, {\sc Zepeda, C.}, {\sc and} {\sc
  Cort\'es, U.} 2005.
\newblock Inferring acceptable arguments with answer set programming.
\newblock In {\em Sixth Mexican International Conference on Computer Science
  (ENC 2005)}. IEEE Computer Science Press, 198--205.

\bibitem[\protect\citeauthoryear{Osorio, Navarro, Arrazola, and Borja}{Osorio
  et~al\mbox{.}}{2006}]{OsoNavArrBor06}
{\sc Osorio, M.}, {\sc Navarro, J.~A.}, {\sc Arrazola, J.~R.}, {\sc and} {\sc
  Borja, V.} 2006.
\newblock Logics with {C}ommon {W}eak {C}ompletions.
\newblock {\em Journal of Logic and Computation\/}~{\em 16,\/}~6, 867--890.

\bibitem[\protect\citeauthoryear{Osorio, Nieves, and Santoyo}{Osorio
  et~al\mbox{.}}{2013}]{OsoNivSan13ENC}
{\sc Osorio, M.}, {\sc Nieves, J.~C.}, {\sc and} {\sc Santoyo, A.} 2013.
\newblock {Complete Extensions as Clark's Completion Semantics}.
\newblock In {\em Mexican International Conference on Computer Science}. IEEE
  Computer Science Press, 81--88.

\bibitem[\protect\citeauthoryear{Prakken and Vreeswijk}{Prakken and
  Vreeswijk}{2002}]{PV02}
{\sc Prakken, H.} {\sc and} {\sc Vreeswijk, G. A.~W.} 2002.
\newblock Logics for defeasible argumentation.
\newblock In {\em Handbook of Philosophical Logic\/}, Second ed., {D.~Gabbay}
  {and} {F.~G{\"u}nthner}, Eds. Vol.~4. Kluwer Academic Publishers,
  Dordrecht/Boston/London, 219--318.

\bibitem[\protect\citeauthoryear{Strass}{Strass}{2013}]{Strass13}
{\sc Strass, H.} 2013.
\newblock Approximating operators and semantics for abstract dialectical
  frameworks.
\newblock {\em Artif. Intell.\/}~{\em 205}, 39--70.

\bibitem[\protect\citeauthoryear{Toni and Sergot}{Toni and
  Sergot}{2011}]{ToniSergot11}
{\sc Toni, F.} {\sc and} {\sc Sergot, M.} 2011.
\newblock Argumentation and answer set programming.
\newblock In {\em Logic Programming, Knowledge Representation, and Nonmonotonic
  Reasoning}, {M.~Balduccini} {and} {T.~C. Son}, Eds. LNCS, vol. 6565.
  Springer.

\bibitem[\protect\citeauthoryear{Verheij}{Verheij}{1996}]{Verheij96}
{\sc Verheij, B.} 1996.
\newblock Two approaches to dialectical argumentation: admissible sets and
  argumentation stages.
\newblock In {\em Proceedings of the Eighth Dutch Conference on Artificial
  Intelligence (NAIC 1996)}.

\bibitem[\protect\citeauthoryear{Wu, Caminada, and Gabbay}{Wu
  et~al\mbox{.}}{2009}]{WuCG09}
{\sc Wu, Y.}, {\sc Caminada, M.}, {\sc and} {\sc Gabbay, D.~M.} 2009.
\newblock Complete extensions in argumentation coincide with 3-valued stable
  models in logic programming.
\newblock {\em Studia Logica\/}~{\em 93,\/}~2-3, 383--403.

\end{thebibliography}

\end{document}